\newacronym{ETC}{ETC}{Event-triggered Control}
\newacronym{TTC}{TTC}{Time-triggered Control}
\newacronym{KO}{KO}{Koopman Operator}
\newacronym{LMI}{LMI}{Linear Matrix Inequality}
\newacronym{LTI}{LTI}{Linear Time Invariant}
\newacronym{KOETC}{KOETC}{Koopman Operator-Based Event-Triggered Control}
\newacronym{RBF}{RBF}{Radial Basis Function}
\titlespacing\section{0pt}{12pt plus 3pt minus 3pt}{1pt plus 1pt minus 1pt}
\titlespacing\subsection{0pt}{10pt plus 3pt minus 3pt}{1pt plus 1pt minus 1pt}
\titlespacing\subsubsection{0pt}{8pt plus 3pt minus 3pt}{1pt plus 1pt minus 1pt}
\definecolor{lime}{HTML}{A6CE39}
\title{Koopman-Based Event-Triggered Control from Data}
\author[1,2]{Zeyad M. Manaa}
\author[1,2]{Ayman M. Abdallah}
\author[1,2]{Mohamed Ismail}
\author[3,4]{Samil El Ferik}
\affil[1]{Department of Aerospace Engineering, King Fahd University of Petroleum and Minerals (KFUPM), Dhahran, Saudi Arabia}
\affil[2]{Interdisciplinary Research Center for Aviation and Space Exploration, KFUPM}
\affil[3]{Department of Control and Instrumentation Engineering, KFUPM}
\affil[4]{Interdisciplinary Research Center for Smart Mobility and Logistics, KFUPM}
\begin{document}

\twocolumn[\begin{@twocolumnfalse}

\maketitle

\begin{abstract}
\gls{ETC} presents a promising paradigm for efficient resource usage in networked and embedded control systems by reducing communication instances compared to traditional time-triggered strategies. This paper introduces a novel approach to \gls{ETC} for discrete-time nonlinear systems using a data-driven framework. By leveraging Koopman operator theory, the nonlinear system dynamics are globally linearized (approximately in practical settings) in a higher-dimensional space. We design a state-feedback controller and an event-triggering policy directly from data, ensuring exponential stability in Lyapunov sense. The proposed method is validated through extensive simulation experiments, demonstrating significant resource savings.
\end{abstract}

\keywords{Event-triggered control, data-driven control, Koopman operator, discrete-time, Lyapunov stability}

\vspace{0.5cm}

\end{@twocolumnfalse}]


\section{Introduction}
\gls{ETC} is an implementation strategy in which the plant and its controller only exchange data when certain output- or state-related conditions are met. Event-triggered control seeks to reduce communication instances by concentrating on the real needs of the system. This contrasts with traditional, conservative time-triggered strategies that depend on fixed communication intervals. In situations where efficient use of resources is essential, such as networked and embedded control systems, this paradigm has gained increasing attention. \gls{ETC} strategies, which offer improved system performance and resource savings in a variety of setups and control problems, have been developed in the literature thanks to the early results in Refs. \cite{aaarzen1999simple, eker2000feedback, tabuada2007event, heemels2012introduction}.

Parametric state-space models are the foundation of traditional control engineering literature, where the plant to be controlled must be identified modeled or identified firstly. These models use system data and are often derived from first principles or architecturally constrained system identification techniques. But in cases when first-principles models are intricate or hard to derive, they can only be considered as approximate representations of real systems, which inevitably leads to modeling errors. These errors impede accurate control design. They propagate through the analysis and implementation phases, ultimately degrading overall system performance.

By excluding the demand for explicit system identification and instead of leveraging data gathered from open-loop simulations/experiments for any system control analysis and design, data-driven control techniques serve as a promising alternative. Several data-driven techniques for creating state feedback controllers and illustrating system dynamics have been shown in recent works, such as those by da Silva et al. \cite{da2018data}, and De Persis and Tesi \cite{de2019formulas}. These techniques greatly streamline the control design process and do not require constantly exciting input data. There are also numerous applications of data-driven control in fields such as robotics \cite{shi2021enhancement}, aerospace \cite{folkestad2021quadrotor}, and power systems \cite{susuki2016applied}. { Other methods, when the model is completely unknown, such as SINDy \cite{brunton2016discovering} can be utilized to firstly get a nonlinear representation of the dynamics of the system. For example, this approach is applied to model the dynamics of: i) quadrotors'  \cite{manaa2024data}, ii) disease \cite{jiang_modeling_2021}, iii) optics communication systems \cite{sorokina_sparse_2016}, iv) chemical processes \cite{bhadriraju_operable_2020}, v) and robotics applications \cite{bhattacharya_sparse_2020}.
    }

Alternatively, when first-principles or system identification methods fail, the controller can be constructed directly using the input, state/output data that are accessible. This approach, referred to as direct data-driven control, \cite{sznaier2020control, campi2002virtual, fliess2013model} constructs controllers directly from data. Although the literature is full of data-driven techniques for control,  only a limited number of techniques exist in the current literature \cite{liu2023data, digge2022data, wang2023model} for data-driven event-based control, particularly for nonlinear systems.
Hence, there is a strong demand for comprehensive data-driven event-based control methods tailored for general nonlinear systems, particularly applicable to discrete-time systems in our case. In many cases, it is appropriate and feasible to formulate the control and triggering conditions as data-dependent \gls{LMI}. Given that most of the existing literature on \gls{ETC} is well developed for \gls{LTI} systems, we aim to globally linearize nonlinear systems by increasing the dimensional space in which they reside.

 
 This is not entirely a new idea. In the 1930s, Koopman and von Neumann~\cite{koopman1931hamiltonian, koopman1932dynamical} introduced a trade-off between the nonlinear nature of dynamical systems and their infinite-dimensional representations, which appear linear in the lifted space. Another resurgence of attention in mid 2000s in the work of Mezi\'c and Banaszuk \cite{mezic2004comparison, mezic2005spectral} has led to new applications and studies using the idea in many fields including, robotics, fluid dynamics, epidemiology, \cite{bruder2021koopman, mamakoukas2021derivative, zhu2022koopman, komeno2022deep, han2021desko, manaa2024koopman, hossain2023data, markmann2024koopman, mezic2024koopman} and many other fields due to the intersection between data science and the easy-to-access computational domain.

We consequently propose \gls{KOETC}, a technique inspired by \gls{KO} to acquire (approximately) global linear systems but in a higher dimensional space. Afterwards, we design the controller and the triggering policy for \gls{ETC} for discrete-time linear systems directly from controlled system data, all together ensuring performance metrics (i.e. Lyapunov exponential stability).

\subsection{Contributions}
By combining Koopman operator theory with event-triggered control (ETC), this paper makes a contribution by introducing a Koopman-based approach to ETC for discrete-time nonlinear systems. Through the approximate global linearization of nonlinear dynamics made possible by this integration, the following direct, data-driven designs are made possible:
\begin{enumerate}[label=\roman*)]
    \item An event-triggering policy minimizes resource consumption by updating control actions only when required, reducing communication instances; and
    \item A state-feedback controller, which effectively stabilizes the system by utilizing the Koopman-lifted linear dynamics.
\end{enumerate}
In comparison to time-triggered approaches, the \gls{KOETC} framework reduces communication events in simulations while achieving stability in the Lyapunov sense.

The rest of this paper is structured to methodically construct and validate the suggested KOETC framework after the motivation and goals have been established. The preliminary information and notations that are necessary to comprehend our methodology are outlined in Section 2. In Section 3, the KOETC framework's design is examined in detail, including the triggering policy and data-driven controller. We provide simulation results in Section 4, which show how effective the approach is. Section 5 provide additional discussions on practical guidelines for the proposed method. Finally, Section 6 concludes the paper and outlines potential directions for future research.

\section{Preliminaries}
\subsection{Notations and Basic Definitions}
Let $\bbZ_{\geq 0} := \{0, 1, 2, \dots\}$ denote the set of nonnegative integers, and let $\bbZ_{> 0} := \bbZ_{\geq 0} \setminus \{0\}$ denote the set of positive integers. We denote by $\bbR$ the set of real numbers and use a similar notation as for $\bbZ$. The $\ell_2$ norm of a vector (a finite sequence) is denoted by $\|\cdot\|$. The symbols $I$ and $\underbar{0}$ denote the identity matrix and the zero matrix, respectively. Given a symmetric matrix $A$, the notation $A \succ 0$ indicates that $A$ is positive definite, while $A \succeq 0$ means that $A$ is positive semi-definite. Similarly, $A \prec 0$ indicates that $A$ is negative definite and $A \preceq 0$ means that $A$ is negative semi-definite. For any matrix $A$, $A^{\top}$ denotes the transpose of $A$. The symbol $\mathcal{N}(\mu, \sigma^{2})$ represents a normal distribution with mean $\mu$ and variance $\sigma^{2}$. {Also, the symbol $\mathcal{U}(a, b)$ represents a normal distribution from the interval $[a, b]$. The symbol $\lambda_i$ denotes an eigenvalue of a matrix. 


}

\subsection{Problem Overview}
Consider the discrete time dynamical system
\begin{align}
\label{eqn:system_dynamics}
    x_{k+1} = f(x_k, u_k),
\end{align}
where the state is $x_k \in \mathbb{R}^n$ and $u_k \in \mathbb{R}^m$ is the control input, each at time instant $k \in \mathbb{Z}_{\geq 0}$ with $n, m \in \mathbb{Z}_{>0}$, and ${f}$ is a transition map such that $f: \mathbb{R}^n \times \mathbb{R}^m \mapsto \mathbb{R}^n$, which is generally nonlinear, unknown, and assumed to be stabilizable. 

We consider a scenario in which the system in (\ref{eqn:system_dynamics}) is connected to a controller via a networked medium. Especially, the state readings are provided to the controller through a digital channel, and the controller has direct access to the actuators. The goal is to design a data-driven event-triggered state-feedback controller with gain $K \in \mathbb{R}^{m\times n}$ to stabilize the plant in (\ref{eqn:system_dynamics}) while abiding by a triggering policy that defines the instances $\{{k_i}\}_{{i}\in{\mathcal{Z}}}$ at which a transmission happens, with $\mathcal{Z} \subseteq \mathbb{Z}_{\geq 0}$. At time instant $k = 0$, assume a transmission occurs, so that ${k}_{0} = 0$. In our settings, the controller is updated only upon the violation of some well-defined triggering policy in contrast to the nominal \gls{TTC}.
The sequence $\{{k_i}\}_{{i}\in{\mathcal{Z}}}$ leads to aperiodic updates of the controller. 
The controller then follows a zero-order hold implementation that takes the form of\footnote{{ To be more precise, the control law should be \(u_k = K\xi(x_{k_{i}}), \quad k \in [{k_i, k_{i+1}}),\), where \(\xi(x_{k_{i}})\) is the lifted state. Since we are introducing the vanilla event-triggered mechanism, we use the original state of the system for completeness. In our problem formulation, we adhere to \(u_k = K\xi(x_{k_{i}}),\) unless otherwise stated. The concept of lifting will be discussed in Section \ref{sec:koopman}.}}
\begin{align}
\label{eqn:ET_controller}
    u_k = Kx_{k_{i}}, \quad k \in [{k_i, k_{i+1}}).
\end{align}
The state error takes into account the provided controller's zero-order hold mechanism. 
\begin{align}
    e_k = x_{k_{i}} - x_k,
\end{align}
which can be seen as the deviation between the current state and the last time event {($i$)} is triggered. We consider an event is triggered whenever the following inequality is violated
\begin{align}
\label{eqn:triggereing_policy}
    \|e_k\| \leq \gamma \|x_k\|,
\end{align}
where $\gamma > 0$ is a threshold parameter for the triggering policy. The policy in (\ref{eqn:triggereing_policy}) is evaluated at every time instant $k$, and the control is updated only when the policy is violated. { Fig. \ref{fig:etc} provides an overview of the \gls{ETC} framework. Here the plant $\mathcal{P}$ represents (\ref{eqn:system_dynamics}), the controller $\mathcal{C}$ represents the control law (\ref{eqn:ET_controller}), and the event-triggering policy corresponds to (\ref{eqn:triggereing_policy}).}
\begin{figure}
    \centering
    \includegraphics[width=\linewidth]{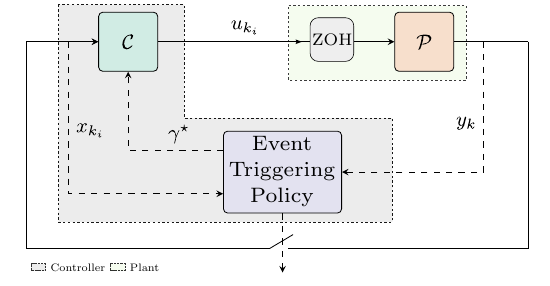}
    \caption{Block diagram visually providing representation that illustrates the core concept underlying \gls{ETC}. It showcases various components and their interconnections, highlighting the essential principles and operational dynamics of the \gls{ETC} framework.}
    \label{fig:etc}
\end{figure}

 \subsection{Persistence of Excitation}
Consider a carried out experiment for the system in (\ref{eqn:system_dynamics}) and its states and input data are recorded in the following way
\begin{align*}
    \mathcal{D} := \{x_k, u_k: k \in [0, (T-1)]\cap\mathbb{Z}_{\geq 0}\},
\end{align*}
where $T$ is the final time of the experiment.
Assume that \(\mathcal{D}\) the dataset exists. Then, we define
\begin{subequations}
\begin{align}
    &U_0:= \mat{u_0~ u_1~ \dots ~ u_{T-1}} \in \mathbb{R}^{m \times T},\\
    &X_0:= \mat{x_0~ x_1~ \dots~ x_{T-1}} \in \mathbb{R}^{n \times T},\\
    & X_1 := \mat{x_1~ x_2~ \dots~ x_T} \in \mathbb{R}^{n \times T}.
\end{align}
\end{subequations}
\begin{assumption} \label{ass:PE}
    Assume $T \geq n+m $, the matrix $\begin{bsmallmatrix}X\\ U_0 \end{bsmallmatrix}$ has full row rank.\hfill  $\Box$
\end{assumption}
Assumption \ref{ass:PE} can be verified numerically for a given set $\mathcal{D}$. The results of Willems et al. \cite{willems2005note} ensures, for discrete-time systems, the validity of assumption \ref{ass:PE} as long as $u$ is a persistently exciting signal.

\section{Framework}
\label{sec:framework}
\subsection{Koopman operator theory}
\label{sec:koopman}

\begin{definition} [Koopman Operator (KO)] Consider the system given in (\ref{eqn:system_dynamics}). The \gls{KO} $\mathcal{K}_t$ is an infinite-dimensional operator \begin{equation}\label{eqn:koopman}
\mathcal{K}_t \xi(x_k) = \xi \circ f(x_k),
\end{equation}
which acts on $\xi \in \mathcal{H}$, where $\mathcal{H}$ is the space of observable functions $\xi: \mathbb{R}^n \to \mathbb{R}$ over the state space, where $\circ$ is the function composition. \hfill $\Box$
\end{definition} 
The \gls{KO} acts on the Hilbert space $\mathcal{H}$ of all scalar measurement functions $\xi$ and is, by definition, a \textit{linear operator}, that is for any $\xi_1, \xi_2 \in \mathcal{H}$ and $\beta_1, \beta_2 \in \mathbb{R}$, we have
\begin{equation}
    \begin{aligned}
    \mathcal{K}_t(\beta_1 \xi_1, \beta_2 \xi_2) &= \beta_1 \xi_1 \circ f + \beta_2 \xi_2 \circ f\\
    &= \beta_1 \mathcal{K}_t \xi_1 + \beta_2 \mathcal{K}_t \xi_2,
\end{aligned}
\end{equation}

An infinite-dimensional space \(\mathcal{H}\) of observable functions is used to represent a nonlinear system linearly using \gls{KO} method \cite{budivsic2012applied}. This means that the dynamics are transformed from nonlinear and finite-dimensional to linear and infinite-dimensional when transitioning from the state-space model to the Koopman representation (see Fig. \ref{fig:Koopman}). However, we are interested in a finite-dimensional approximation of \gls{KO} from a practical perspective. Several approximation methods are addressed in \cite{bevanda2021koopman, proctor2018generalizing}.
\begin{figure}
    \centering
    \includegraphics[width = \linewidth]{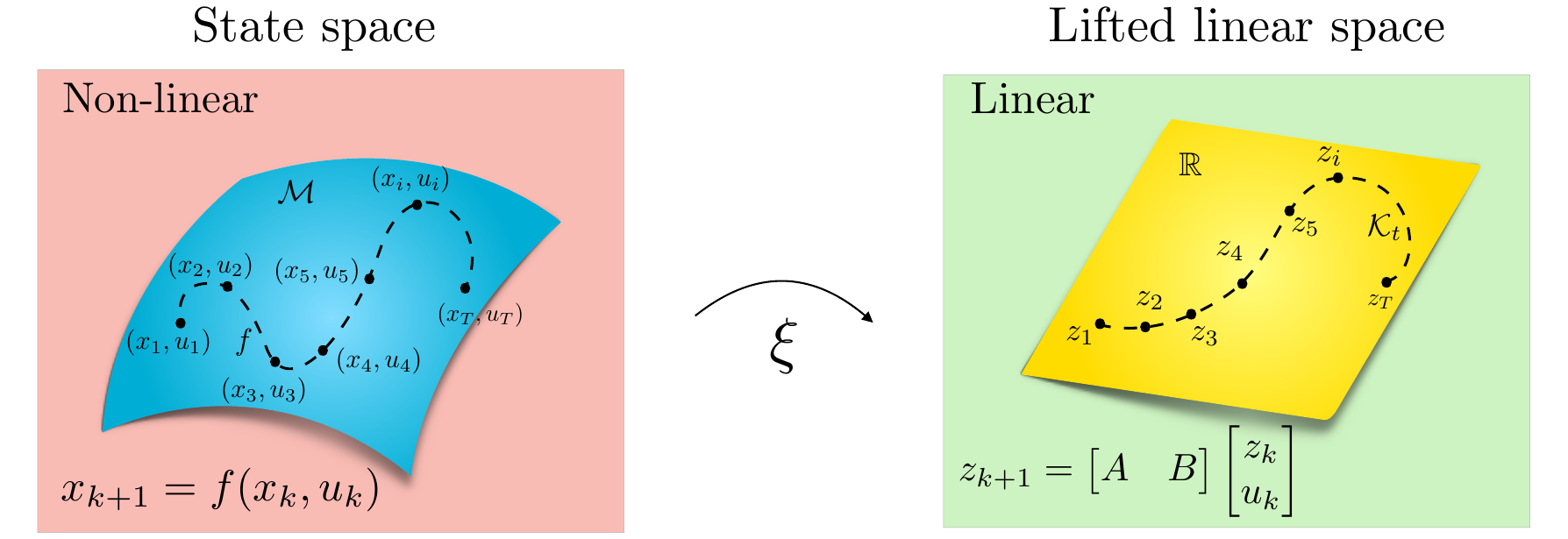}
    \caption{Illustration of the Koopman Operator: The {\color{red}{red}} panel represents the generic nonlinear state-space. Conversely, the {\color{green}{green}} panel represents the linear space.  
}
    \label{fig:Koopman}
\end{figure}

{
To extend this analysis to controlled systems, there exist several methods including \cite{peitz2020data, korda2018linear}. In \cite{peitz2020data}, the authors treated the controlled system as uncontrolled while treating the input as a system parameter. On the other hand, Korda and Mezi\'c \cite{korda2018linear}
dealt with the controlled system in an extended state-space to account for control.

Here, we briefly revisit the approach of \cite{korda2018linear}. In particular, consider the system in (\ref{eqn:system_dynamics}). Let \(\ell(\mathfrak{U})\) be the space of all infinite vectors \(u^{\circ} = \{u_k\}_{k=0}^{\infty}\) with the symbol \(u_{\circ} \in \mathfrak{U}\) and \(\mathfrak{U}\) being an input space. We denote the left shift operator by \(\mathcal{G}^{\star}\) (e.g., \(\mathcal{G}^{\star} u^{\circ}_k = u^{\circ}_{k+1}\)). Also, define \(\mathcal{X}\) to be an extended state such that, \( \mathcal{X} = \begin{bmatrix}
    x_k & u^{\circ}_k \end{bmatrix}^{\top} \). So, the system in (\ref{eqn:system_dynamics}) can be reformulated as,
    \begin{equation}
        \mathcal{X}_{k+1} = \tilde{f} \big(  \mathcal{X} \big) = \mat{f(x_k, u^{\circ}_0) \\ \mathcal{G}^{\star} u^{\circ}_k}.
    \end{equation}
If \(\tilde{\xi} \in \mathcal{H}: \mathbb{R}^n \times \mathbb{R}^m \to \mathbb{R} \) be a new version of the predefined observable function, the Koopman operator \(\mathcal{K}_t: \mathcal{H} \to \mathcal{H}\) for the controlled system turns out to be,
\begin{eqnarray}
    \mathcal{K}_t \tilde{\xi}(\mathcal{X}) = \tilde{\xi} \circ \tilde{f}(\mathcal{X}).
\end{eqnarray}
This was a demonstration of the extension from the uncontrolled systems to the controlled systems. From now on, we will use \(f,\text{and } \xi\) interchangeably between controlled and uncontrolled systems unless otherwise stated.
}

Also, \gls{KO} provides (approximately, in a practical settings) global linear representation for nonlinear dynamics if the right set of observable functions is chosen in as shown in the following. Generally speaking, the observable functions are hard to identify. {They can be found by many method including, but not limited to, brute-force trial and error in a specific basis for the Hilbert space (e.g., trying numerous polynomial functions or Fourier basis functions) or by prior knowledge about the system. Several efforts have been made on this matter \cite{otto2019linearly, yeung2019learning, lusch2018deep, netto2021analytical, kamb2020time, mamakoukas2019local, mamakoukas2021derivative} -- among others.} {Our work relies heavily on the choice of the observable functions. Existing literature on how to choose such dictionary of function can be utilized in order to make best use of the presented paradigm. We discuss the effect of choosing the bad lifting dictionary in illustrative example 2 in section \ref{sec:ex2}}

Motivated by the preceding analysis, we employ the idea of lifting the nonlinear dynamics from its state-space to look linear in a higher-dimensional state-space. 
\begin{remark}
\label{rem:koopman}
    Since our method rely on the good choice of the observable functions, a blurry prior physical knowledge of the underlying plant, not necessarily a complete knowledge, but at least a knowledge that can describe the domain shape in which the system operates to design the observable functions could be of great benefit.  \hfill $\Box$
\end{remark}
\begin{remark}
    At this stage of the work, we design the controller directly from the data. This step requires a set of observable functions that are satisfactory to approximate \gls{KO} as discussed in remark \ref{rem:koopman}. In the sense that we do not focus on the identification of the \gls{KO} itself, we did not include discussion on such a topic. However, in more general scenarios, one may need to identify the operator for any purpose. Readers can refer to \cite{schmid2010dynamic, proctor2016dynamic}. \hfill $\Box$
\end{remark}
Now, the collected set $\mathcal{D}$ should be revised. Instead of having the system's states only, we must consider the additional observable functions taking the form of 
${\Xi}({x}) = \begin{bmatrix}
    \xi_1({x}) & \xi_2({x}) & \dots & \xi_p({x})
\end{bmatrix}^{\top}.$ The observable functions ${\Xi} \in \mathbb{R}^{p} (p > n)$. Note that we only lift the state not the control. So the set $\mathcal{D}$ becomes
\begin{subequations}
\label{eqn:lift_Xs}
\begin{align}
    &U_0:= \mat{u_0~ u_1~ \dots ~ u_{T-1}} \in \mathbb{R}^{m \times T},\\
    &Z_0 := \Xi(X_0) \in \mathbb{R}^{p \times T},\\
    &Z_1 := \Xi(X_1) \in \mathbb{R}^{p \times T}.
\end{align}
\end{subequations}
\begin{remark}
    In response to this change, a slight modification of assumption \ref{ass:PE}, replacing the previous condition \(T \geq n + m\) with \(T \geq p + m\). \hfill $\Box$
\end{remark}
In response to this, the condition in (\ref{eqn:triggereing_policy}) becomes,
\begin{align}
\label{eqn:triggering_policy_lifted_explicit}
    \|e_k^{\xi}\| \leq \gamma \|\xi(x_k)\|,
\end{align}
where \(e_k^{\xi} := \xi(x_k) - \xi(x_{k_i})\).
Hence, after choosing the set of observable functions, the system in (\ref{eqn:system_dynamics}) can be now formulated as
\begin{subequations}
\label{eqn:lifted_system}
\begin{align}
    z_{k+1} & = Az_k + Bu_k,\\
    x_k & = Cz_k.
\end{align}
\end{subequations}

\subsection{Event-triggered Control for the Lifted Representation of the Non-linear Dynamics}

Consider the system given in (\ref{eqn:lifted_system}), the globally linear version of the system in (\ref{eqn:system_dynamics}), subject to the controller (\ref{eqn:ET_controller}) that results in
\begin{subequations}
\label{eqn:closed_loop}
\begin{align}
    z_{k+1} &= Az_k + BKz_{k_i}\\
    &= Az_k + BKz_k + BKz_{k_{i}} - BKz_k\\ 
    &= (A+BK)z_k + BKe_k, \quad \forall k \in [{k_i, ~k_{i+1}}),
\end{align}
\end{subequations}
which can be understood as a closed-loop representation of the system in (\ref{eqn:lifted_system}) with the state error.

An alternative representation of the event-triggered closed loop system should be derived to account for the data-driven nature of this work. In ref. \cite{de2023event} the authors derived a data-driven representation of the closed loop system without considering the matrix $BK$. On the other hand, Digge and Pasumarthy \cite{digge2022data} developed a closed loop representation that allows dealing with the event-triggered formulation. The representation in \cite{da2018data} is modified to account for the lifted linear representation of the nonlinear dynamics.
\begin{lemma}[Data-driven representation \cite{digge2022data, de2019formulas}]
The equivalent data-driven closed loop representation of the system (\ref{eqn:closed_loop}) under satisfaction of assumption \ref{ass:PE} and where     \begin{align} 
        \mat{I \\ K} = \mat{Z_0\\ U_0} L, ~ \text{and} ~
        \mat{\underbar{0} \\ K} =  \mat{Z_0\\ U_0} N,
        \label{eqn:alt_rep2}
    \end{align} holds, takes the following form
    \begin{align}
    \label{eqn:alt_rep}
        z_{k+1} = Z_1L z_k + Z_1 N e_k,
    \end{align}
    where $L$ and $N$ are $T\times p$ matrices. \hfill $\Box$
\end{lemma}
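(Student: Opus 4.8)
The plan is to verify the data-based representation (\ref{eqn:alt_rep}) by substituting the parametrization (\ref{eqn:alt_rep2}) into the model-based closed-loop form (\ref{eqn:closed_loop}) and checking that the two expressions coincide; the only nontrivial preliminary is to argue that matrices $L$ and $N$ satisfying (\ref{eqn:alt_rep2}) actually exist, which is where Assumption \ref{ass:PE} enters.

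First I would record the single identity that ties the data to the lifted dynamics. Because every column triple $(z_k, u_k, z_{k+1})$ in the recorded experiment obeys the lifted model (\ref{eqn:lifted_system}), stacking these columns yields the exact matrix relation
\begin{equation}
\label{eqn:data_consistency}
Z_1 = A Z_0 + B U_0 = \mat{A & B}\mat{Z_0 \\ U_0}.
\end{equation}
This is the workhorse of the whole argument: it lets me trade the unknown system matrices $(A,B)$ for the measured data $(Z_0,U_0,Z_1)$.

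Next I would decode the two conditions in (\ref{eqn:alt_rep2}) blockwise. The first reads $Z_0 L = I$ together with $U_0 L = K$, and the second reads $Z_0 N = \underbar{0}$ together with $U_0 N = K$. Using (\ref{eqn:data_consistency}) I then compute the two data-dependent gains appearing in (\ref{eqn:alt_rep}):
\begin{equation}
Z_1 L = \mat{A & B}\mat{Z_0 L \\ U_0 L} = \mat{A & B}\mat{I \\ K} = A + BK,
\end{equation}
and
\begin{equation}
Z_1 N = \mat{A & B}\mat{Z_0 N \\ U_0 N} = \mat{A & B}\mat{\underbar{0} \\ K} = BK.
\end{equation}
Substituting these into the right-hand side of (\ref{eqn:closed_loop}) gives $Z_1 L z_k + Z_1 N e_k = (A+BK)z_k + BK e_k = z_{k+1}$, which is precisely (\ref{eqn:alt_rep}) and closes the derivation.

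The step I expect to carry the real weight is not this computation but the existence of $L$ and $N$. Solving $\mat{Z_0 \\ U_0} L = \mat{I \\ K}$, and likewise the system for $N$, is a linear problem whose solvability for an \emph{arbitrary} designed gain $K$ is guaranteed exactly by the full-row-rank hypothesis of Assumption \ref{ass:PE} (now with $T \geq p+m$ after lifting): a full-row-rank $\mat{Z_0 \\ U_0} \in \bbR^{(p+m)\times T}$ admits a right inverse, so any right-hand side with $p+m$ rows is reachable. It is worth flagging that (\ref{eqn:data_consistency}) is exact only when the lifted dynamics (\ref{eqn:lifted_system}) hold exactly; under the approximate Koopman linearization this identity, and hence the representation (\ref{eqn:alt_rep}), holds only up to the lifting residual, a caveat to keep in mind when the result is later invoked in the stability design.
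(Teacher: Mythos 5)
Your proof is correct and follows essentially the same route as the paper's: both substitute the parametrization (\ref{eqn:alt_rep2}) into the closed-loop form (\ref{eqn:closed_loop}) using the data-consistency identity $Z_1 = \mat{A & B}\mat{Z_0 \\ U_0}$, and both ground the existence of $L$ and $N$ in the full-row-rank condition of Assumption \ref{ass:PE} (the paper cites Rouch\'e--Capelli where you invoke a right inverse, which is the same fact). Your version is if anything slightly more careful, since you state the identity $Z_1 = AZ_0 + BU_0$ explicitly rather than leaving it implicit in an underbrace, and your closing caveat about the lifting residual under approximate Koopman linearization is a valid observation the paper does not make at this point.
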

\begin{proof}
Let assumption \ref{ass:PE} be satisfied. Hence, by the Rouch\'e-Capelli theorem, there exist matrices \(L\) and \(N\) that satisfy (\ref{eqn:alt_rep2}). So, another representation of (\ref{eqn:alt_rep}) can be written as
\[
z_{k+1} = \begin{bmatrix} A & B \end{bmatrix} 
\begin{bmatrix} I \\ K \end{bmatrix} 
z_k + 
\begin{bmatrix} A & B \end{bmatrix} 
\begin{bmatrix} \underbar{0} \\ K \end{bmatrix} 
e_k.
\]
Using (\ref{eqn:alt_rep2}), the closed-loop system is given by
\[
z_{k+1} = 
\underbrace{
\begin{bmatrix} A & B \end{bmatrix}
\begin{bmatrix} Z_0 \\ U_0 \end{bmatrix}
}_{Z_1}
L z_k + 
\underbrace{
\begin{bmatrix} A & B \end{bmatrix}
\begin{bmatrix} Z_0 \\ U_0 \end{bmatrix}
}_{Z_1} 
N e_k.
\]
Therefore, the data-driven representation of the closed-loop system (\ref{eqn:closed_loop}) obtained as stated in  (\ref{eqn:alt_rep2}).
\end{proof}
This formulation can be considered as a reparametrization of the system in (\ref{eqn:closed_loop}) in terms of data. In other words, no need for the prior explicit system identification step. Having established this formulation, we now proceed to derive the condition for system (\ref{eqn:alt_rep}) to be exponentially stable in Lyapunov sense. 
A linear system described by \(z_{k+1} = Az_k\), where \(A \in \mathbb{R}^p \), is considered exponentially stable if there exists a function \(V:\mathbb{R}^p \to \mathbb{R}\) defined by 
\( V(z_k) = z_k^\top S z_k \)
with \( S \succ 0 \) and symmetric, such that 
\( V(z_{k+1}) \leq \alpha V(z_k) \)
along the system's trajectories for all \( k \geq 0 \) and for some \(\alpha \in \mathcal{A} := (0, 1] \subset \mathbb{R}\).
\begin{remark}
\label{rem:alpha}
    For unstable systems, the choice of $\alpha$ is critical as it impacts the values of the controller gain $K(\alpha)$ which must satisfy the necessary conditions and thresholds to stabilize the system. {This condition can be formalized as, \[
    \alpha^{\star} = \inf_{\alpha \in \mathcal{A}} \big\{ \alpha : K(\alpha) \implies |\lambda_i| < 1, ~\forall \lambda_i \big\},
\] where the \(K(\alpha)\) is the gains corresponding to one value of $\alpha$ on \(\mathcal{A}\).} \hfill $\Box$
\end{remark}
Consider the classical Lyapunov candidate function described below, the exponential Lyaponuv stability criteria\footnote{The full analysis is given in the appendix.} is given by
\begin{align}
\label{eqn:Lyapunov}
 \begin{bmatrix} z_k \\ e_k \end{bmatrix}^{\top}  \begin{bmatrix}
L^\top Z_1^\top S Z_1 L - \alpha S & L^\top Z_1^\top S Z_1 N \\ 
N^\top Z_1^\top S Z_1 L & N^\top Z_1^\top S Z_1 N 
\end{bmatrix} \begin{bmatrix} z_k \\ e_k \end{bmatrix} \leq 0
\end{align}

In this work, the design of the \gls{ETC} strategy should not violate the Lyapunov stability condition in (\ref{eqn:Lyapunov}) to ensure exponential stability.

\subsection{Learning Controller From Data}
Firstly, we design the controller gains to stabilize the globally linearized system. We consider the data-driven closed loop representation in (\ref{eqn:alt_rep}) neglecting the error at this stage
\begin{align}
\label{eqn:no_error_dynamics}
    z_{k+1} = Z_1 L z_k,
\end{align}
the controller gains can be designed directly from data, as discussed in \cite[Section IV.~A]{de2019formulas}. Further, the following theorem ensures the Lyaponuv stability condition.
\begin{theorem}[{Direct Controller Design}]
    Let condition \ref{ass:PE} hold. And by exploiting the results of lemma 1. Then any matrix $G_1$ that satisfy the following \gls{LMI},
    \begin{align}
    \label{eqn:stability_lmi}
        \mat{Z_0 G_1 & G_1^{\top} Z_1^{\top}\\
        Z_1 G_1 & Z_0 G_1} \succeq 0
    \end{align}
    results in
    \begin{align}
    \label{eqn:cont_gain}
        K = U_0 G_1 (Z_0 G_1)^{-1}
    \end{align}
    which stabilizes the system (\ref{eqn:lifted_system}). \hfill $\Box$
\end{theorem}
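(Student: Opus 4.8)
The plan is to read the \gls{LMI} (\ref{eqn:stability_lmi}) as a Schur-complement encoding of a discrete-time Lyapunov (Stein) inequality for the error-free closed loop (\ref{eqn:no_error_dynamics}), recovered through a data-dependent change of variables. First I would isolate the repeated diagonal block $\Theta := Z_0 G_1$. Because $\Theta$ occupies both diagonal positions and the off-diagonal blocks are transposes of one another, the matrix in (\ref{eqn:stability_lmi}) is symmetric, and feasibility forces $\Theta \succeq 0$; since the gain (\ref{eqn:cont_gain}) requires $(Z_0 G_1)^{-1}$, I would take $\Theta \succ 0$ so that $K$ is well defined. Pivoting on the top-left block and taking the Schur complement then yields
\begin{align}
\label{eqn:plan_schur}
\Theta - (Z_1 G_1)\,\Theta^{-1}\,(Z_1 G_1)^{\top} \succeq 0.
\end{align}

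Next I would introduce the reparametrization $G := G_1 (Z_0 G_1)^{-1}$. A direct computation gives $Z_0 G = I$ and $U_0 G = U_0 G_1 (Z_0 G_1)^{-1} = K$, so that $\begin{bmatrix} I \\ K \end{bmatrix} = \begin{bmatrix} Z_0 \\ U_0 \end{bmatrix} G$; that is, $G$ plays the role of $L$ in Lemma 1, and the resulting error-free closed loop is $z_{k+1} = Z_1 G\,z_k = (A+BK)\,z_k$.

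The crux of the argument is the identity $Z_1 G_1 = (A+BK)\,\Theta$. I would derive it from the data factorization $Z_1 = \begin{bmatrix} A & B \end{bmatrix}\begin{bmatrix} Z_0 \\ U_0 \end{bmatrix}$, which holds because every column triple in the data obeys (\ref{eqn:lifted_system}); hence $Z_1 G_1 = A\,Z_0 G_1 + B\,U_0 G_1$, and substituting $U_0 G_1 = K\,Z_0 G_1 = K\Theta$ collapses the right-hand side to $(A+BK)\Theta$. Writing $A_{\mathrm{cl}} := A+BK$ and inserting $Z_1 G_1 = A_{\mathrm{cl}}\Theta$ into (\ref{eqn:plan_schur}) cancels the inner $\Theta^{-1}$ and leaves the Stein inequality
\begin{align}
\label{eqn:plan_stein}
\Theta - A_{\mathrm{cl}}\,\Theta\,A_{\mathrm{cl}}^{\top} \succeq 0, \qquad \Theta \succ 0.
\end{align}

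Finally I would close the stability argument: testing (\ref{eqn:plan_stein}) against a left eigenvector $w$ of $A_{\mathrm{cl}}^{\top}$ with eigenvalue $\lambda$ gives $(1 - |\lambda|^{2})\,w^{*}\Theta w \ge 0$, and $\Theta \succ 0$ forces $|\lambda| \le 1$, so $\Theta$ certifies that the spectrum of $A_{\mathrm{cl}}$ lies in the closed unit disk and (\ref{eqn:lifted_system}) under (\ref{eqn:cont_gain}) is stabilized. I expect the main obstacle to be twofold. The first is bookkeeping: justifying that $\Theta = Z_0 G_1$ is symmetric and \emph{strictly} positive definite rather than merely PSD, which is precisely what both the invertibility in (\ref{eqn:cont_gain}) and the Schur-complement pivot require. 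The second, more substantive, is that the non-strict $\succeq 0$ in (\ref{eqn:stability_lmi}) only certifies $|\lambda|\le 1$; to obtain the genuine exponential decay targeted by the Lyapunov condition (\ref{eqn:Lyapunov}) one must enforce a strict inequality or carry the contraction rate $\alpha$ of Remark \ref{rem:alpha}, and I would phrase the conclusion to reflect whichever convention is adopted.
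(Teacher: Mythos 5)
Your proof is correct, and it runs the argument in the opposite direction from the paper's. The paper works synthesis-style: it starts from the data-based Lyapunov decay condition $L^{\top}Z_1^{\top} S Z_1 L - \alpha S \preceq 0$ in (\ref{eqn:stability}), substitutes $G_1 = L S^{-1}$ together with the constraints $S^{-1} = Z_0 G_1$ and $K S^{-1} = U_0 G_1$, and applies the Schur complement to \emph{arrive at} the LMI (\ref{eqn:stability_lmi}), with stability inherited from the Lyapunov candidate $V(z_k) = z_k^{\top} S z_k$ of the appendix. You instead verify the stated implication directly: you unwind (\ref{eqn:stability_lmi}) by a Schur-complement pivot on $\Theta = Z_0 G_1$, use the factorization $Z_1 = \begin{bmatrix} A & B \end{bmatrix}\begin{bsmallmatrix} Z_0 \\ U_0 \end{bsmallmatrix}$ (the same identity driving the paper's Lemma 1) together with $U_0 G_1 = K\Theta$ to collapse $Z_1 G_1$ to $(A+BK)\Theta$, and finish with an eigenvector test of the resulting Stein inequality rather than a Lyapunov-function computation; your construction $G = G_1 (Z_0 G_1)^{-1}$ even spares the Rouch\'e--Capelli existence step, since it exhibits the matrix satisfying (\ref{eqn:alt_rep2}) explicitly. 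The underlying substance --- the change of variables and the constraints $Z_0 G = I$, $U_0 G = K$ --- is shared, but your direction is the logically appropriate one for the theorem as stated (feasibility implies stability), and it buys precision about what the printed LMI actually certifies: as you note, with the non-strict $\succeq$ and no $\alpha$ in the $(1,1)$ block, the Stein inequality only yields $|\lambda| \le 1$, i.e.\ marginal rather than exponential stability. This is a genuine discrepancy in the paper itself: its proof carries the rate $\alpha$ throughout (the Schur complement there would place $\alpha Z_0 G_1$ in the top-left corner, exactly as in the Theorem~\ref{th:gamma} LMI (\ref{eqn:gamma})), yet $\alpha$ is dropped from the stated (\ref{eqn:stability_lmi}); there is also a typo in the proof where $(Z_0 G_1)$ appears in place of $S = (Z_0 G_1)^{-1}$. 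Your closing caveats --- reinstating $\alpha < 1$ or strictness to recover exponential decay, and requiring $Z_0 G_1$ to be symmetric and strictly positive definite (implicit in the paper via $S \succ 0$ symmetric and $S^{-1} = Z_0 G_1$) --- are exactly the repairs needed, so your proposal not only succeeds but also surfaces a gap the paper glosses over.
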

\begin{proof}
    To check the stability in exponential decay of the system (\ref{eqn:no_error_dynamics}) with a rate $\alpha$, implies
    \begin{align}
    \label{eqn:stability}
        L^{\top}Z_1SZ_1L - \alpha S \preceq 0,
    \end{align}
    with $L$ satisfying (\ref{eqn:alt_rep2}). Let $G_1 := LS^{-1}$, and pre- and post-multiply (\ref{eqn:stability}) by $S^{-1}$, the  stability of the system can be guaranteed if there exists two matrices $G_1$ and $S$ such that
    \[
    \begin{array}{l}
    G_1^{\top} Z_1^{\top} S Z_1 G_1 - \alpha S^{-1}\preceq 0 \\
    K S^{-1} = U_0 G_1 \\
    S^{-1} = Z_0 G_1
    \end{array}
    \]
    Moreover, we use $S^{-1} = Z_0 G_1$ and obtain
    \[
    \begin{array}{l}
    G_1^{\top} Z_1^{\top} (Z_0 G_1) Z_1 G_1 - \alpha Z_0 G_1 \preceq 0 \\
    Z_0 G_1 \succ 0 \\
    K = U_0 G_1(Z_0 G_1)^{-1}
    \end{array}
    \]
    Using Schur's complement lemma on the first inequality, we reach to (\ref{eqn:stability_lmi}) which results in gains given from (\ref{eqn:cont_gain}) that exponentially stabilize the system.
\end{proof}

\subsection{Learning the Triggering Policy from Data}
In the interval \( [{k_i, ~k_{i+1}}) \), it is essential that inequality (\ref{eqn:Lyapunov}), which ensures exponential convergence, is also satisfied. The following theorem derives a window for the parameter \(\gamma\) that ensures the stability of system (\ref{eqn:alt_rep}).

\begin{theorem}[{Optimal Threshold}]
\label{th:gamma}
    Assume that the condition \ref{ass:PE} is satisfied. So, the relative threshold parameter \(\gamma\) for the event-triggered implementation (\ref{eqn:triggereing_policy}) with the controller (\ref{eqn:cont_gain}) can be calculated by solving for \(\gamma\) such that
    \begin{equation}
    \label{eqn:gamma}
        \begin{aligned}
            &\max_{\substack{q, G_2}} \quad \gamma \\
            &\text{s.t.} \\
            &\left[
            \begin{array}{cccc}
            \alpha Z_0 G_1 & \underbar{0} & G_1^\top Z_1^\top & \gamma Z_0 G_1 \\
            \underbar{0} & q I & G_2^\top Z_1^\top & \underbar{0} \\
            Z_1 G_1 & Z_1 G_2 & Z_0 G_1 & \underbar{0} \\
            \gamma Z_0 G_1 & \underbar{0} & \underbar{0} & q I
            \end{array}
            \right] \succeq \underbar{0},\\
            & q > 0, \quad Z_0 G_2 = 0, \quad U G_2 - q K = 0,
        \end{aligned}
    \end{equation}
which will result in stability of the system (\ref{eqn:alt_rep}) in exponential behaviour. \hfill $\Box$
\end{theorem}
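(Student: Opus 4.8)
The plan is to treat Theorem~\ref{th:gamma} as an application of the S-procedure that fuses the Lyapunov decrease certificate~(\ref{eqn:Lyapunov}) with the region carved out by the triggering rule. First I would rewrite the lifted triggering inequality $\|e_k^{\xi}\|\le\gamma\|\xi(x_k)\|$ of~(\ref{eqn:triggering_policy_lifted_explicit}) as the single quadratic constraint
\begin{equation*}
\begin{bmatrix} z_k \\ e_k \end{bmatrix}^{\top}
\begin{bmatrix} \gamma^{2} I & \underbar{0} \\ \underbar{0} & -I \end{bmatrix}
\begin{bmatrix} z_k \\ e_k \end{bmatrix} \ge 0 ,
\end{equation*}
so that remaining inside the inter-event set is exactly the statement that this form is nonnegative, while the goal~(\ref{eqn:Lyapunov}) is that the quadratic form built from the block matrix $M$ is nonpositive. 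By the S-procedure, a sufficient (and, under a Slater point, lossless) condition for ``constraint $\Rightarrow$ decrease'' is the existence of a multiplier $q>0$ with $M + q\,\mathrm{diag}(\gamma^{2}I,-I)\preceq \underbar{0}$.

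Next I would convert this matrix inequality into the data-based LMI by reusing the substitutions of Theorem~1. Setting $G_1:=LS^{-1}$ together with $S^{-1}=Z_0G_1$ and $K=U_0G_1(Z_0G_1)^{-1}$ disposes of the controller block, and I would introduce the companion variable $G_2:=qN$ to absorb the error channel; the defining relations $Z_0N=\underbar{0}$ and $U_0N=K$ of Lemma~1 then appear precisely as the affine side constraints $Z_0G_2=\underbar{0}$ and $U_0G_2-qK=\underbar{0}$ listed in~(\ref{eqn:gamma}). A congruence transformation by $\mathrm{diag}(S^{-1},I)$ followed by repeated use of Schur's complement lemma---the same device used in Theorem~1---turns the quadratic terms $G_i^{\top}Z_1^{\top}SZ_1G_j$ into the off-diagonal data blocks $Z_1G_i$ with pivot $Z_0G_1=S^{-1}$, while the bilinear penalty $q\gamma^{2}$ is lifted into the off-diagonal entries $\gamma Z_0G_1$ paired with the slack block $qI$. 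Collecting the pieces reproduces the $4\times 4$ array in~(\ref{eqn:gamma}).

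The payoff of this lifting is that, with $G_1$ (hence $K$ and $S$) held fixed at the value returned by Theorem~1, every entry of the resulting matrix is affine in the remaining decision variables $\gamma$, $q$ and $G_2$: the genuinely nonconvex ratio $\gamma^{2}/q$ never appears explicitly, having been traded for the linear occurrences $\gamma Z_0G_1$ and $qI$. Hence maximizing $\gamma$ over $(q,G_2)$ subject to the LMI and the affine equalities is a semidefinite program, which I would solve directly or by bisection on $\gamma$, yielding the largest threshold compatible with the exponential certificate.

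The main obstacle I anticipate is bookkeeping the scalar multiplier consistently through the two Schur complements: because $G_2=qN$, the quadratic error terms pick up factors of $q$ and $q^{2}$, so one must place the multiplier on the correct side of the S-procedure (equivalently, scale $\mathrm{diag}(\gamma^{2}I,-I)$) to land on clean blocks $qI$ rather than $q$- or $q^{2}$-weighted ones. A secondary point requiring care is the justification that the relaxation is tight, i.e.\ that a strict point $\gamma^{2}\|z\|^{2}-\|e\|^{2}>0$ exists so the S-procedure is lossless, and---strictly---that the whole argument is carried out in the lifted coordinates $e_k^{\xi}$ rather than in the original state, as flagged in the footnote to~(\ref{eqn:ET_controller}).
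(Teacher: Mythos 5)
Your proposal follows essentially the same route as the paper's proof: an S-procedure fusing the triggering constraint with the Lyapunov decrease condition, the substitutions \(G_1 = LS^{-1}\), \(S^{-1} = Z_0 G_1\), and \(G_2\) proportional to \(N\) (the paper sets \(q=\eta^{-1}\), \(G_2=\eta^{-1}N\), which coincides with your \(G_2=qN\)), followed by Schur complements and a congruence by \(\mathrm{diag}(S^{-1},\cdot,\cdot)\) to land on the \(4\times 4\) LMI, maximized over \((q,G_2)\) with \(G_1\) fixed from Theorem~1. The one wrinkle---that \(q\) must enter as the \emph{inverse} of the S-procedure multiplier for the clean blocks \(qI\) and the affine constraints \(Z_0G_2=\underbar{0}\), \(U_0G_2-qK=\underbar{0}\) to emerge---is precisely the bookkeeping issue you flagged yourself, so the argument is sound and matches the paper's.
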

{\begin{proof}
For exponential stability during event-triggered control, whenever the triggering condition (\ref{eqn:triggereing_policy}) is met, the condition (\ref{eqn:Lyapunov}), which guarantees stability, must hold as well. This relationship can be encoded using the S-procedure \cite{boyd2004convex}.
According to the S-procedure, (\ref{eqn:triggereing_policy}) implies (\ref{eqn:Lyapunov}) if there exists a constant \(\eta \geq 0\) such that:
\[
\eta
\begin{bmatrix}
-\gamma^2 I & \underbar{0} \\
\underbar{0} & I
\end{bmatrix}
\preceq 
\begin{bmatrix}
L^{\top} Z_1^\top S Z_1 L - \alpha S & L^{\top} Z_1^{\top} S Z_1 L \\
L^{\top} Z_1^{\top} S Z_1 L & L^{\top} Z_1^{\top} S Z_1 L
\end{bmatrix}.
\]
Using Schur's complement, and post- and pre-multiplying by the \(\texttt{diag}(S^{-1}, I, I)\), we derive:
\[
\begin{aligned}
&\begin{bmatrix}
-\eta \gamma^2 S^{-2} + \alpha S^{-1} & \underbar{0} & S^{-1} L^\top Z_1^\top \\
\underbar{0} & \eta I & N^\top Z_1^\top \\
Z_1 L S^{-1} & Z_1 N \eta^{-1} & S^{-1}
\end{bmatrix} \succeq \underbar{0}.
\end{aligned}
\]
By changing the variables \(G_1 = L S^{-1}\), \(G_2 = \eta^{-1} N\), \(q = \eta^{-1}\), and \(S^{-1} = Z_0 G_1\), we arrive at the LMI:
\[
\begin{aligned}
    &\begin{bmatrix}
    \alpha Z_0 G_1 & \underbar{0} & G_1^\top Z_1^\top & \gamma Z_0 G_1 \\
    \underbar{0} & qI & G_2^\top Z_1^\top & \underbar{0} \\
    Z_1 G_1 & Z_1 G_2 & Z_0 G_1 & \underbar{0} \\
    \gamma Z_0 G_1 & \underbar{0} & \underbar{0} & qI
    \end{bmatrix} \succeq \underbar{0}.
\end{aligned}
\]
The result of theorem \ref{th:gamma} allows to maximize $\gamma$ over the variables $G_2$ and $q$. The result also implies that any $\gamma \in [{0, \gamma^{\star}})$ stabilizes the system, where $\gamma^{\star}$ is the solution for (\ref{eqn:gamma}).
\end{proof}
}

Now, we have all the components put together. A detailed algorithm for the entire process is given in algorithm \ref{alg:data_drive_ETC}. 

\begin{algorithm}[htp]
\caption{Koopman Operator-Based Event-Triggered Control}
\label{alg:data_drive_ETC}
\begin{algorithmic}[1]
\Require $\alpha$, $X_0$, $X_1$, and $U_0$
\State  Lift $X_0$, and $X_1$ via (\ref{eqn:lift_Xs} b, and c)
\State Solve for $G_1$ in the \gls{LMI} given in (\ref{eqn:stability_lmi})
\State Solve for the controller gain $K$ in (\ref{eqn:cont_gain})
\State Maximize the threshold parameter $\gamma$ to get $\gamma^{\star}$ in (\ref{eqn:gamma})
\State  Choose any $\gamma \in [{0, \gamma^{\star}}]$, (typically the max. value gives wider inter-event time window)\\
\Return $\gamma^{\star}$, and $K$
\end{algorithmic}
\end{algorithm}
\section{Illustrative Simulations and Results}
{

In this section, three numerical examples are selected to illuminate distinct aspects of the proposed algorithm. Example 1 in Sec.~\ref{sec:ex1} admits exact linearisation through a suitably chosen observable, making it an effective test bed for parameter studies. In this setting, we investigate how the decay rate~$\alpha$ influences performance by tracking the peak discrete derivative of the Lyapunov function, and we assess the algorithm’s ability to stabilise the system from a range of initial conditions. Additionally, example 2 in Sec.~\ref{sec:ex2} addresses a system for which, to the best of the authors’ knowledge, no observable yields a closed‑form exact linearisation. This case isolates the consequences of observable selection and highlights how the proposed method behaves when exact linearisation is unavailable. Finally, example 3 in Sec.~\ref{sec:ex3} demonstrates the method’s generality for linear systems, as formalised in the corresponding section.

Following this section, we provide additional notes regarding some critical aspects of the proposed method.





%
}
\subsection{Illustrative Example 1: Proof of Concept}
\label{sec:ex1}
We consider a case of nonlinear system with slow manifold used in relative works \cite{brunton2016koopman, surana2016linear, surana2016koopman}:
\begin{align}
\label{eqn:nonlinear_ex}
\begin{bmatrix}
x_1 \\
x_2
\end{bmatrix}
\mapsto
\begin{bmatrix}
\rho x_1 \\
\kappa x_2 + (\rho^2 - \kappa) x_1^2 + u
\end{bmatrix}.
\end{align}

In this scenario, there exists a polynomial stable manifold defined as \(x_2 = x_1^2\). 
Within the Koopman-inspired framework, if the correct observable functions were chosen such that $\Xi(x) = \mat{x_1 & x_2 & x_1^2}^{\top}$, the nonlinear system in (\ref{eqn:nonlinear_ex}) can be expressed linearly as
\begin{align} \label{eqn:koopman_sys}
\begin{bmatrix}
z_1 \\
z_2 \\
z_3
\end{bmatrix}_{k+1}
=
\begin{bmatrix}
\rho & 0 & 0 \\
0 & \kappa & (\rho^2 - \kappa) \\
0 & 0 & \rho^2
\end{bmatrix}
\begin{bmatrix}
z_1 \\
z_2 \\
z_3
\end{bmatrix}_k + \mat{0\\1\\0} u_k
\end{align}
 Considering the parameters for the system, $\rho = 0.6$, and $\kappa = 1.2$, the corresponding eigenvalues are $\lambda_1 = 0.6$, $\lambda_2 = 1.2$, and $\lambda_3 = 0.36$. Since $\lambda_2 > 1$, the system exhibits instability and the goal is to stabilize the trajectory around the origin. We collected the data for $T = 45$ which is enough for assumption \ref{ass:PE} to hold -- on a theoretical note, $T \geq m + p$ samples should be enough (i.e. in this example $T \geq 4$) to obey assumption \ref{ass:PE}. Therefore, $T = 4$ should work. The input signal is drawn from a normal distribution following $u \sim \mathcal{N}(0, 1)$.

Then, after deploying the steps in algorithm \ref{alg:data_drive_ETC}, we obtain $K = \mat{0.0206  & -1.1109 &  -0.1530}$, which in turn gives $\gamma^{\star} = 0.7664$. We simulated the system for both \gls{ETC}, and \gls{TTC} and illustrated the behavior in Fig. \ref{fig:results}. All the results depicted in Fig. \ref{fig:results}, are acquired after pulling the states back from the higher-dimensional space, in this case from $\mathbb{R}^3$ to $\mathbb{R}^2$, by applying (\ref{eqn:lifted_system}b) with \(C = \begin{bsmallmatrix}
    I_2 & 0\\ 0 & 0 
\end{bsmallmatrix}\).

\begin{figure*}[]
    \centering
    \includegraphics[width = \textwidth]{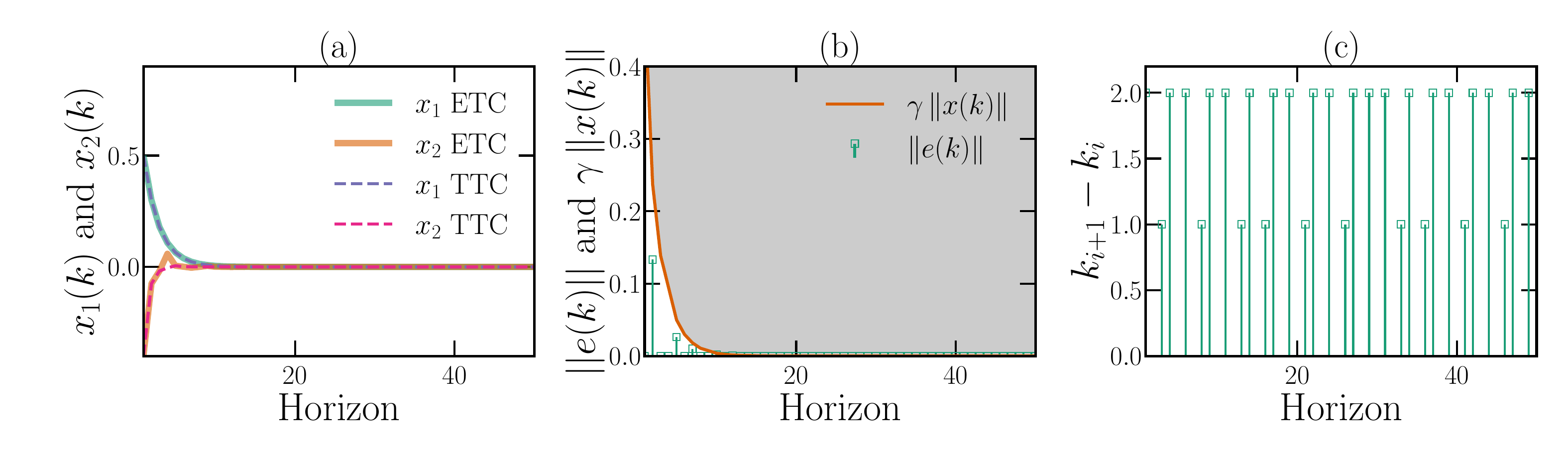}
    \caption{Results of the illustrative example 1. (a) Behaviour of state \(x_1\) and \(x_2\) for both ETC and TTC over the horizon. (b) Norms of the error \(\|e_k\|\) and the threshold parameter \(\gamma \|x_k\|\). (c) Inter-event times \(k_{i+1} - k_i\) showing the intervals between successive events.}
    \label{fig:results}
\end{figure*}

Fig. \ref{fig:results}(a), illustrates the state evolution \(x_1\) and \(x_2\) against time under both \gls{ETC} and \gls{TTC} techniques. The trajectories for \gls{ETC} demonstrate excellent tracking performance in comparison with the nominal \gls{TTC}. This highlights the efficacy of the developed event-triggered approach in maintaining system stability while minimizing unnecessary updates. 

Also, in Fig. \ref{fig:results}(b), the graph shows that \(\|e_k\|\) remains consistently below \(\gamma \|x_k\|\), satisfying the triggering condition. As shown in Fig. \ref{fig:results}(c), the substantial reduction in communication instances ($40\%$) addresses potential concerns regarding communication overhead in practical implementations.


{Finally, as noted from the numerical results, ETC not only achieves comparable performance to TTC but does so with fewer communication instances (about 40 compared to 100) and lower control cost (approximately 0.203 vs. 0.210), which supports our hypotheses. Another note in our experiment, both Koopman based linearization \gls{ETC} and \gls{TTC} have control cost much lower than the traditional Taylor linearization technique, consistent with the results of Brunton et al. \cite{brunton2016koopman}.}

For this example, we want to thoroughly examine ammd understand the influence of various parameters on the system's behavior and stability. This includes analysis of how different initial conditions and the parameter \(\alpha\) impact the system dynamics. We explore these effects through a series of extensive simulations, designed to provide a comprehensive view of the system’s response under a range of scenarios to enrich our theoretical insights and understanding. 

Initially, we assessed the robustness of the algorithm by simulating ten different random initial conditions drawn from a uniform distribution \(\sim \mathcal{U}(-5, 5)\). Fig. \ref{fig:rnd_init_conds} shows the behavior of both \(x_1\) and \(x_2\) while starting with those random initial conditions. The figures show that while the initial conditions varies significantly, the behaviour of the system states stabilizes in a finite amount of time. An interesting observation from the same figure is that the error decay rate between the state and the reference in the log scale is nearly linear, supporting the paper's earlier demonstration of the exponential error decaying property.
\begin{figure}[]
    \centering
    \includegraphics[width=\linewidth]{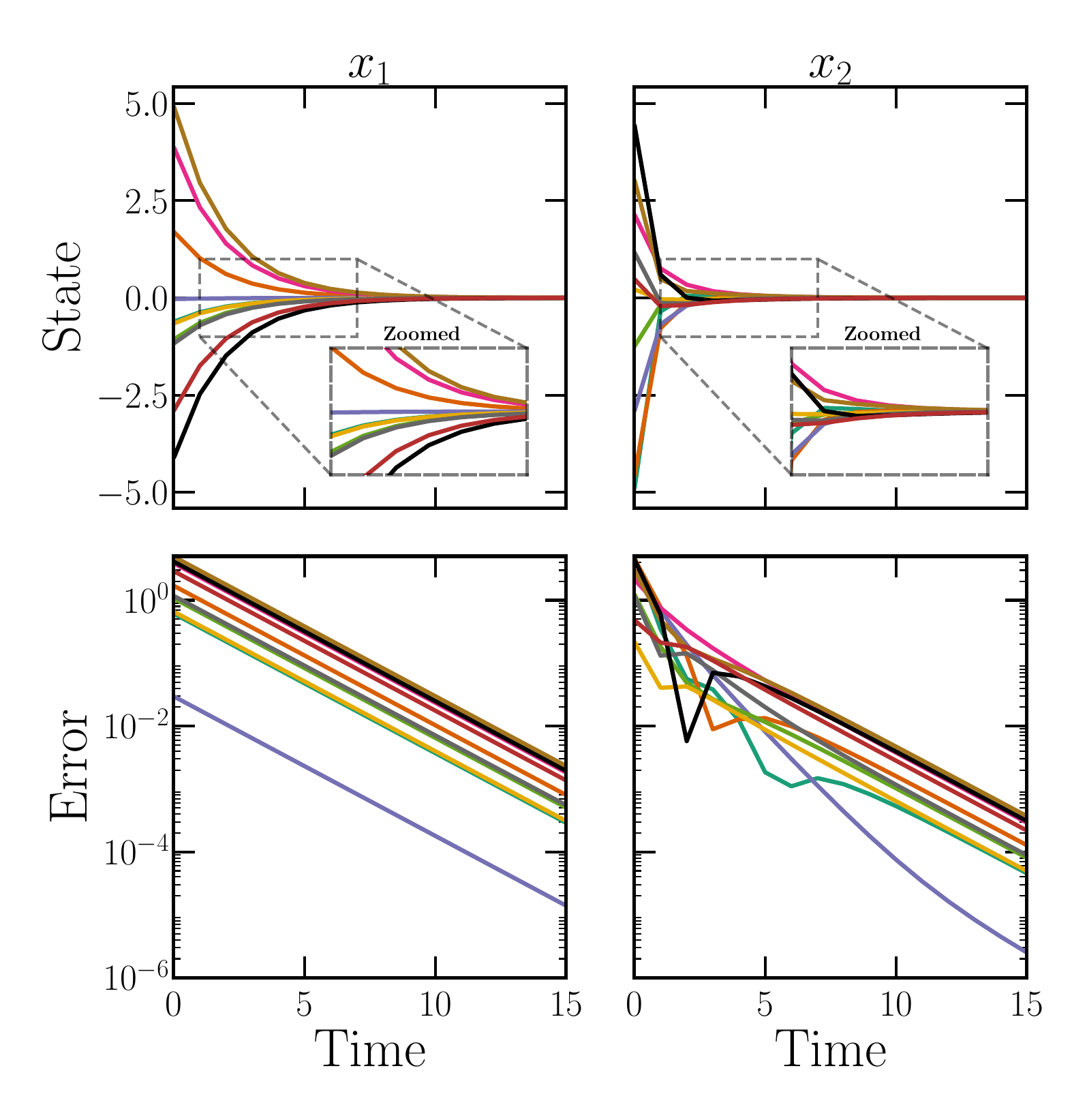}
    \caption{{A simulation of ten random initial conditions drawn from a uniform distribution $X \sim \mathcal{U}(-5, 5)$. The figure shows the behaviour of \(x_1\) (left), and $x_2$ (right).}}
    \label{fig:rnd_init_conds}
\end{figure}

Subsequently, the initial conditions were fixed at $x_0 = \begin{bmatrix}
    0.5 & -0.4
\end{bmatrix}$ simulations were conducted across a fine grid of different \(\alpha\) values ranging from 0.4 to 1. The choice of 0.4 as the starting value is informed by empirical observations, which indicate that this value represents the minimum threshold necessary to achieve an adequate gain for system stabilization, as detailed in remark \ref{rem:alpha}. Fig. \ref{fig:alpha} demonstrates that for each value of $\alpha$, there is no violation in the rate of Lyapunov function decay. The values on the x-axis in this figure must not exceed their corresponding values on the y-axis (i.e. they cannot cross the line \(\max \big(V(k+1)/V(k)\big) = \alpha \)). In other words, no deviation from the expected decaying behavior is observed.

\begin{figure}[]
    \centering
    \includegraphics[width=\linewidth]{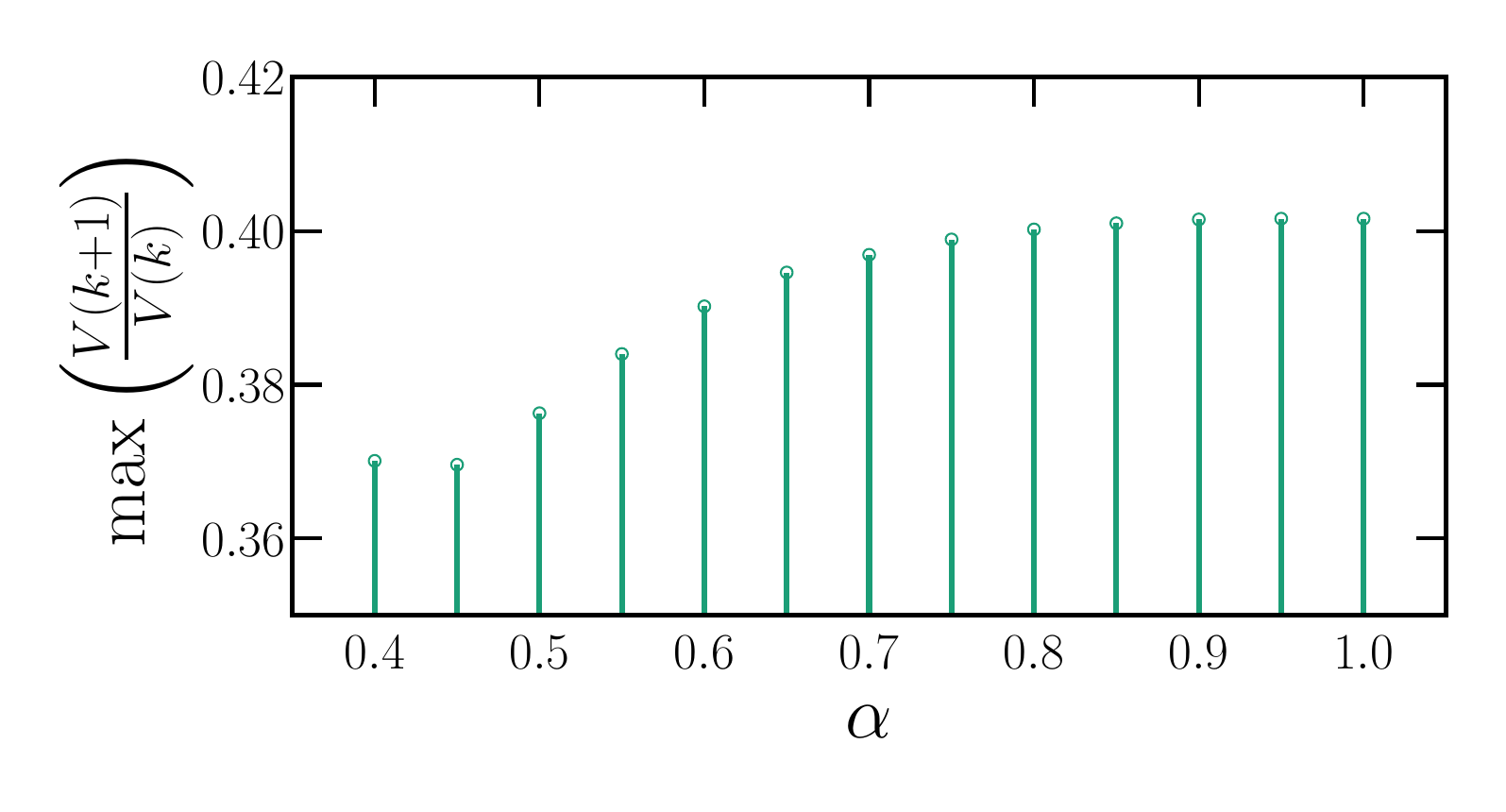}
    \caption{{
    The relationship between \(\alpha\) and the Lyapunov function decay rate. Simulations confirm no violations in the decay rate, as all points lie below the boundary \(\max \big(V(k+1)/V(k)\big) = \alpha \), ensuring system stability across the tested \(\alpha \) range.}}
    \label{fig:alpha}
\end{figure}

\begin{figure*}
    \centering
    \includegraphics[width=\textwidth]{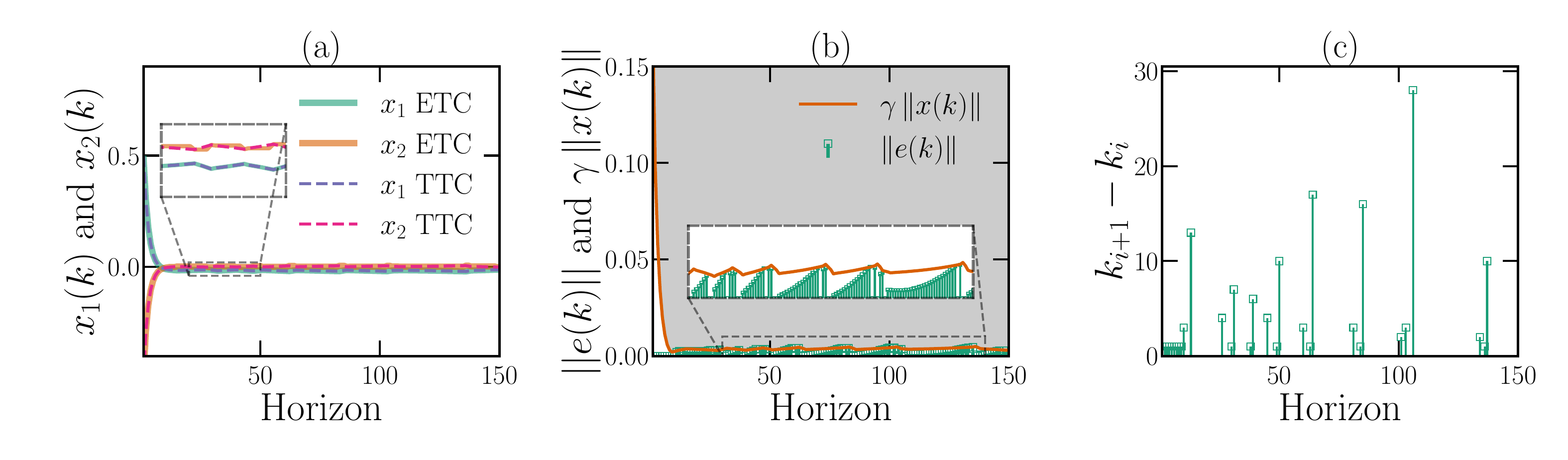}
    \caption{Results of the illustrative example 2. (a) Behaviour of state \(x_1\) and \(x_2\) for both ETC and TTC over the horizon. (b) Norms of the error \(\|e_k\|\) and the threshold parameter \(\gamma \|x_k\|\). (c) Inter-event times \(k_{i+1} - k_i\) showing the intervals between successive events.}
    \label{fig:poly_event_results}
\end{figure*}
In terms of time analysis, using a persistently excited signal sequence of length 45, the controller learning process took 0.4 seconds on an M2 MacBook Air, demonstrating the algorithm’s practical efficiency. Similarly, learning the triggering policy under the same setup was achieved in 0.49 seconds.

\subsection{Illustrative Example 2: Polynomial System}

{
In this example, we consider a discrete-time nonlinear polynomial system defined as:
\begin{align}
\label{eq:poly_sys}
\begin{bmatrix}
x_1 \\
x_2
\end{bmatrix}
\mapsto
\begin{bmatrix}
a x_1 + b x_1^2 + c x_2 \\
d x_1 x_2 + e x_2^2 + u
\end{bmatrix},
\end{align}
 with parameters \( a = 1.05 \), \( b = 0.1 \), \( c = 0.5 \), \( d = 0.25 \), and \( e = 0.08 \). For this system, data were collected for \(T = 150\) which is sufficient for assumption 1 to hold. In this example, we gather the data in a closed-loop manner inspired by Ref. \cite{do2024practical} practical guidelines. In their work, they gather the identification data using a controller architecture similar to the one intended for design. For example, in the current example, we wish to design a stabilizing controller in the form of \(u = Kz\). Therefore, we adopt a closed loop controller of a similar structure, \(u = \tilde{K}z\), where the over all signal is perturbed by a random signal to ensure excitation. Perturbed trajectories are generated starting from a nonzero initial condition. The choice of the observable functions are motivated by the nonlinear terms in the system plus higher order polynomials as follows: 
\(
\Xi(x) = 
\begin{bmatrix}
1 & x_1 & x_2 & x_1^2 & x_1 x_2 & x_2^2 & x_1^3 & x_2^3 &  x_1^3 & x_2^3 
\end{bmatrix}^\top.
\) The addition of the higher polynomials in the dictionary of the observable functions helped to reduce the steady state error as shown in Fig. \ref{fig:poly_sys_degree}. With the previous parameters and the choosing observables, in addition to choosing \(\alpha = 0.7\), we proceed to apply Algorithm \ref{alg:data_drive_ETC}. As a result, we got \(K = \mat{-0.8742  & -0.3989  & -0.3779   & -0.5845  & -0.1581 \dots \\
\dots  0.0301 &   0.0463  & -0.0157    & 0.0084  & -0.0071}\), and \(\gamma = 0.2237\). The corresponding results and the behavior of the system are shown in Fig. \ref{fig:poly_event_results}. Although the interpretations of these results are quiet similar to that of Section \ref{sec:ex1}, we still need to asses the choice of the observable functions; since the system does not obey a choice of observables that exactly linearize the nonlinear system in a closed from as done in Section \ref{sec:ex1}. This is known as the closure problem.  

To asses the effect of observable functions, we tested multiple cases under different observables. We adhere to monomial lifting functions with increasing order. Firstly, one can see the weak representation of a first-order polynomial for observable functions. This means that we implicitly assume that the nonlinear system in (\ref{eq:poly_sys}) can be well represented using linear observables, which is not evident from the most left bar of Fig. \ref{fig:poly_sys_degree} following our understanding. Furthermore, a clear decreasing pattern appears for the steady state error as the degree number increases from 2 to 5. Surprisingly, the steady state error starts to increase at degree 6 which is a sign of overfitting. Consequently, increasing the number of observables does not guarantee a good representation. Further discussion on this will be provided in Section \ref{sec:disccussions_obs}.
\label{sec:ex2}}

\begin{figure}
    \centering
    \includegraphics[width=\linewidth]{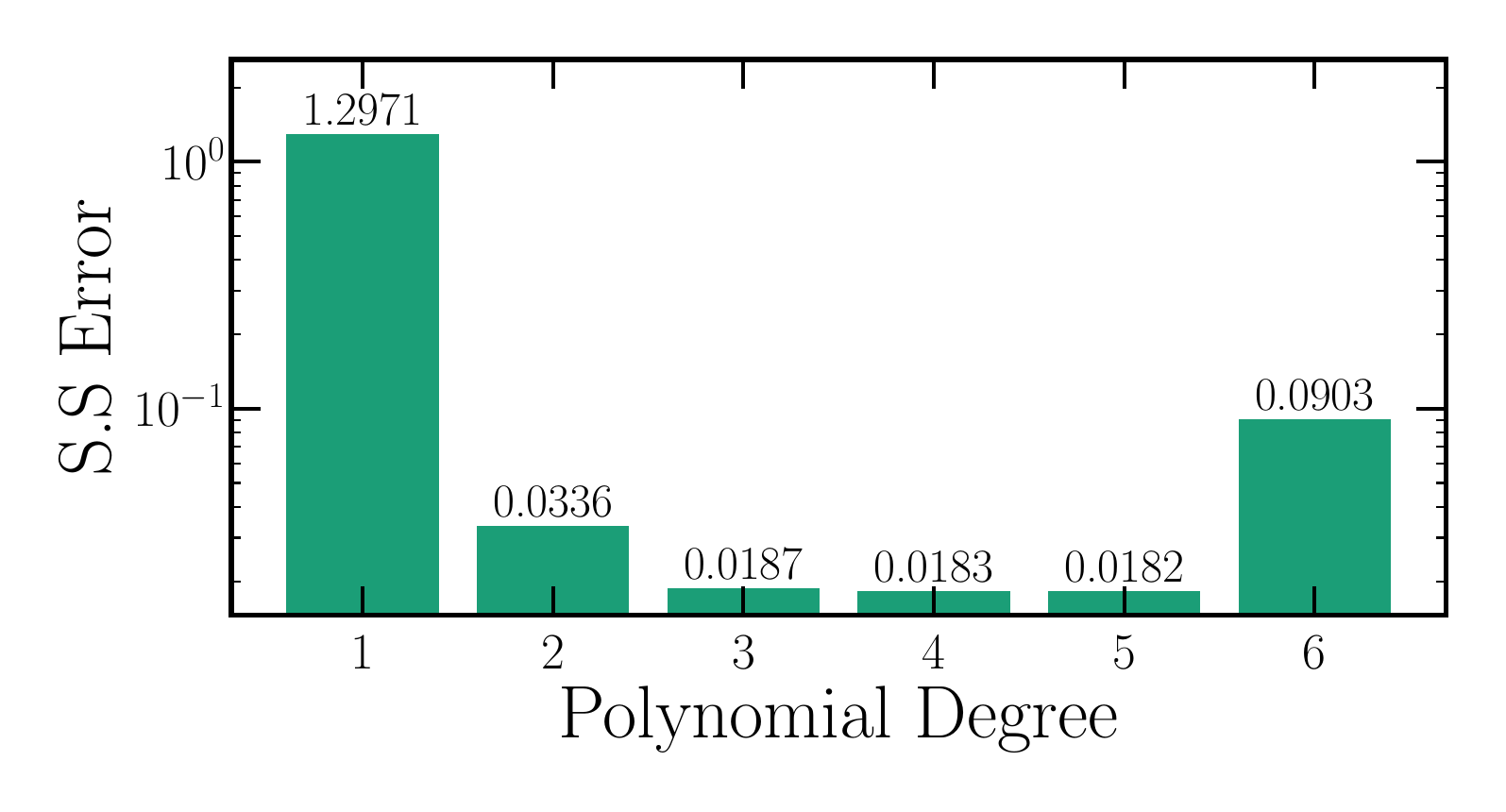}
    \caption{Steady-state error versus polynomial degree of the lifting function used for controller synthesis.}
    \label{fig:poly_sys_degree}
\end{figure}


\subsection{Illustrative Example 3: Linear Case Note}
{
\label{sec:ex3}
In this example, we consider the discrete-time linear system (originally presented in \cite{digge2022data})
\[
x_{k+1} = Ax_k + Bu_k,
\]
with \(x_k \in \mathbb{R}^2\) and \(u_k\) generated from a random control signal within the interval \([-3,3]\). The system matrices are given by
\(
A = \left[ \begin{smallmatrix} 0.91 & 0.0995 \\ 0.02 & 0.99 \end{smallmatrix}\right], \quad B = \left[ \begin{smallmatrix} 0.01 \\ -0.184 \end{smallmatrix}\right],
\)
and the simulation is performed over 20 data points to learn a controller gain \(K\) with \(\alpha = 0.9\) as well as the triggering threshold \(\gamma\). Under the choice of the observable functions as identity, the gain \(K\) and an event-triggering threshold \(\gamma\) are computed according to the framework in \ref{alg:data_drive_ETC} and resulted in \(K = \mat{0.3820 & 0.8179}\) and \(\gamma = 0.4653\).

Notably, if the observable functions are chosen as the identity, then the algorithm directly addresses the linear dynamics without any additional lifting. In this case, our approach reduces to the classical linear control and event-triggered control framework as presented in \cite{digge2022data}. The results of this example are depicted in Fig. \ref{fig:linear_sys} which matches with excellent agreement the results in Digge and Pasumarthy \cite{digge2022data} work.

\begin{figure*}
    \centering
    \includegraphics[width=\linewidth]{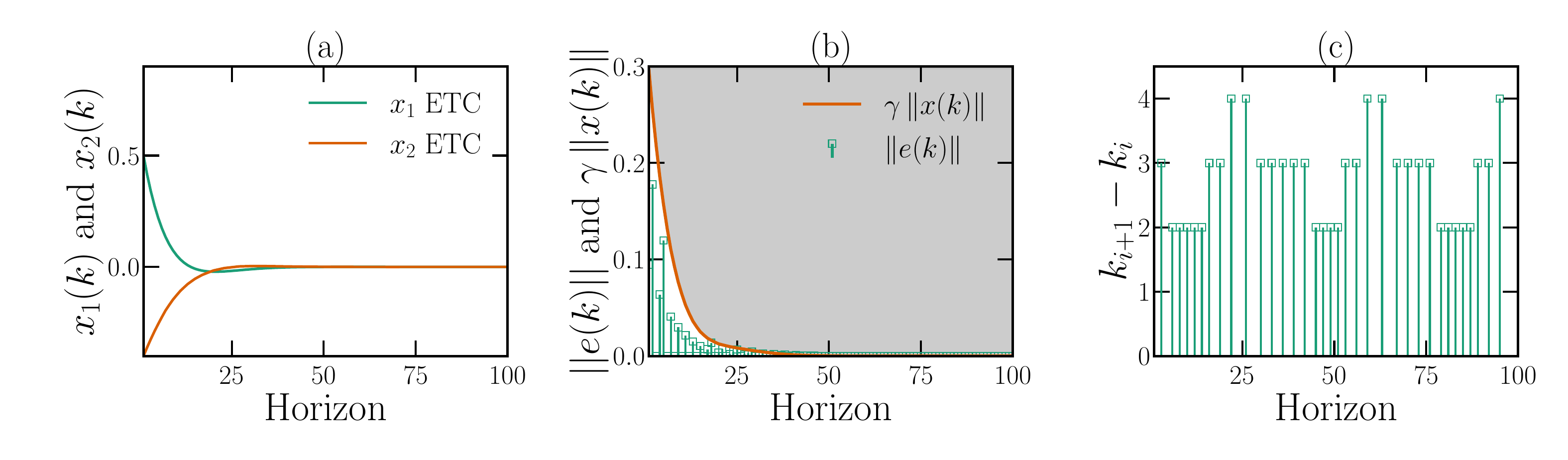}
    \caption{Results of the illustrative example 3. (a) Behaviour of state \(x_1\) and \(x_2\). (b) Norms of the error \(\|e_k\|\) and the threshold parameter \(\gamma \|x_k\|\). (c) Inter-event times \(k_{i+1} - k_i\) showing the intervals between successive events.}
    \label{fig:linear_sys}
\end{figure*}

}

\section{Discussions}
\subsection{On the choice of the observable functions}
\label{sec:disccussions_obs}
{
We devote this section to discussing some notes towards practical discussion of the selection of observable functions. As mentioned earlier in the paper, the choice of the lifting functions is critical part in designing the direct data driven controller that abides to an additional event-triggering role. This is also noted in the literature covering the Koopman operator \cite{williams2015data}. The authors noted that the rate of convergence of the Koopman based identification method depends on the dictionary of the observable functions which they refer to as trials or basis functions. In their work they assume the basis functions spans the subspace of the observables. One can motivate the choice of the observables in our methods using similar approach. Possible choices of the observables are: polynomials, and \glspl{RBF}. As advised by the authors, polynomials are a good option to be used for the system defined on \(\mathbb{R}^N\), and the \glspl{RBF} are a good candidate for systems defined on irregular domains.

Recently, considerable research effort has been devoted to the idea of finding a good set of observables. Namely: 
\begin{enumerate}[label=\roman*)]
    \item employing deep learning autoencoders \cite{lusch2018deep, ge2024deep},
    \item local higher-order derivatives of the nonlinear system \cite{mamakoukas2019local} (though it requires a symbolic expression of the nonlinear system),
    \item motivating the choice of observables by analytical constructions in \cite{netto2021analytical} and \cite{shi2021acd}. In \cite{shi2021acd} the method is specifically tailored to certain robotic systems by exploiting the topological spaces of these systems to guide the construction of Hermite polynomial-based observables,
    \item avoiding explicit basis selection by defining observables implicitly in a reproducing kernel Hilbert space \cite{williams2014kernel, lee2024kernel} using kernel functions (e.g. Gaussian kernel \(k(x, y) = e^{\|x-y\|^2/2\sigma^2}\)).
\end{enumerate}

Building on the above discussion, how to choose an appropriate set of observables remains an important, yet open, question. Nonetheless, many of the methods discussed above have shown practical effectiveness.
}


\subsection{On the Controller Synthesis}
{
In this section we refer back to the examples in Section \ref{sec:ex1}, and Section \ref{sec:ex2}. Although the feedback control law looks linear on the observable space \(z\), it could be looked to it as a nonlinear controller on the original state \(x\). For example, the illustrative example 1 in Section \ref{sec:ex1}, the designed controller in terms of the original states can be interpreted as,
\begin{align}
\begin{aligned}
    u &= \mat{K_1 & K_2} \mat{x_1 \\ x_2} + K_3 x_1^2 \\
    & = 0.0206\;x_1 -1.1109\; x_2 - 0.1540\;x_1^2
\end{aligned}
\end{align}
where the third term indicates the nonlinearity imposed by the choice of the nonlinear observable function \(x_2^2\). Similar analysis could be done for example 2 by examining the effect of the observables on the control law. On the other hand, in example 3 in Section \ref{sec:ex3}, the control law remains linear as it originally introduced.

We recommend that the future research should look into how to optimally design the controller gain and also a policy that respects the system stability. 
}

\section{Conclusion}
To sum up, this study proposes an event-triggered control approach based on data-driven methods for discrete-time nonlinear systems. By lifting the nonlinear dynamics into a higher-dimensional linear representation inspired by the \gls{KO} theory, the method makes it possible to create an event-triggered controller driven by data. Through the development of a closed-loop system and the implementation of a triggering strategy, the proposed method stabilizes the plant with less frequent control updates. 

The event-triggered closed-loop system's exponential stability is guaranteed by the stability analysis, which is based on the Lyapunov criterion. Numerical simulations and theoretical analysis are used to show how effective the suggested strategy is. This work creates opportunities for real-world applications in networked control systems and advances event-triggered control techniques for nonlinear systems. 

The foundations provided by this work shall allow dealing with many other scenarios including, when the plant (discrete or continuous) include time varying parameters, when the full state measurements are not available, or when policies other than the zero-order hold is used. Additionally, examining the various lifting techniques available in the literature is important, as well as, testing the scalability of the solution.

\appendix
\section{Lyapunov Stability Analysis}
This subsection of the appendix presents the exponential stability of the system given by
\begin{align}
    z_{k+1} = Z_1 L z_k + Z_1 N e_k
\end{align}
in a Lyapunov sense with the candidate $ V(k) = z_k^\top S z_k$. 

First, \( V(z_{k+1}) \) can be computed as follows
   \begin{equation}
    \begin{aligned}
       V(z_{k+1}) & = \left( Z_1 L z_k + Z_1 N e_k \right)^\top S \left( Z_1 L z_k + Z_1 N e_k \right) \\
       & = z_k^\top L^\top Z_1^\top S Z_1 L z_k + z_k^\top L^\top Z_1^\top S Z_1 N e_k + \dots \\
       \dots &~ e_k^\top N^\top Z_1^\top S Z_1 L z_k + e_k^\top N^\top Z_1^\top S Z_1 N e_k
    \end{aligned}
   \end{equation}
Lyapunov exponential stability condition with convergence rate $\alpha$ can be reached by defining
\( V(z_{k+1}) \leq \alpha V(z_k) \). This leads to the following identity based on the candidate Lyapunov function 
\begin{equation}
\label{eqn:haha}
    \begin{aligned}
        & z_k^\top L^\top Z_1^\top S Z_1 L z_k + z_k^\top L^\top Z_1^\top S Z_1 N e_k + \dots \\
        & \dots e_k^\top N^\top Z_1^\top S Z_1 L z_k + e_k^\top N^\top Z_1^\top S Z_1 N e_k \leq \alpha z_k^\top S z_k
    \end{aligned}
\end{equation}
By defining $v = \begin{bsmallmatrix}
    z_k \\ e_k
\end{bsmallmatrix}$, Eqn. (\ref{eqn:haha}) can be written in the form of \( v^\top \Psi v \leq 0 \), where
\begin{align}
       \Psi = \begin{bmatrix} 
   L^\top Z_1^\top S Z_1 L - \alpha S & L^\top Z_1^\top S Z_1 N \\ 
   N^\top Z_1^\top S Z_1 L & N^\top Z_1^\top S Z_1 N 
   \end{bmatrix}
\end{align}
Therefore, the Lyapunov stability condition for the system can be written as
\begin{align}
\label{eqn:cond}
    \begin{bmatrix} z_k \\ e_k \end{bmatrix}^{\top} \begin{bmatrix} 
   L^\top Z_1^\top S Z_1 L - \alpha S & L^\top Z_1^\top S Z_1 N \\ 
   N^\top Z_1^\top S Z_1 L & N^\top Z_1^\top S Z_1 N  \end{bmatrix} \begin{bmatrix} z_k \\ e_k \end{bmatrix} \leq 0
\end{align}
If condition (\ref{eqn:cond}) is satisfied, it then guarantees exponential stability of the system with convergence rate \(\alpha\).

\bibliography{references}

\end{document}